\documentclass[submission,copyright]{eptcs}

\newcommand{\Set}{\mathbf{\mathsf{Set}}}
\newcommand{\func}[1]{#1}
\newcommand{\fF}{\func{F}}
\newcommand{\fG}{\func{G}}
\newcommand{\fT}{\func{T}}

\newcommand{\larr}{\rightarrow }
\newcommand{\fun}[2]{#2^{#1}}
\newcommand{\x}{\times}

\newcommand{\cA}{\mathscr{A}}
\newcommand{\Id}{\func{\mathbf{1}}} 
\newcommand{\iid}{\mathsf{id}}
\newcommand{\Nat}{\mathbb{N}}
\newcommand{\Ext}{\mathsf{Ext}}

\newcommand{\flatten}{\mathsf{flat}} 

\usepackage{amsmath,amsthm,mathrsfs}
\usepackage{enumerate}

\theoremstyle{plain}
\newtheorem{theorem}{Theorem}[section]
\newtheorem{lemma}[theorem]{Lemma}

\newtheorem{definition}[theorem]{Definition}

\theoremstyle{definition}

\newtheorem{example}[theorem]{Example}

\theoremstyle{remark}
\newtheorem{remark}[theorem]{Remark}

\newcommand{\oast}{\circledast}

\usepackage[all]{xy}

\author{Mauro Jaskelioff
\institute{CIFASIS, Rosario, Argentina}
\institute{FCEIA, Universidad Nacional de Rosario, Argentina} 
\and 
Ondrej Rypacek
\institute{King's College, London, UK}}
\title{An Investigation of the Laws of Traversals}

\input{lhs2tex.tex}

\begin{document}
\maketitle
\begin{abstract}
  Traversals of data structures are ubiquitous in
  programming. Consequently, it is important to be able to
  characterise those structures that are traversable and understand
  their algebraic properties. Traversable functors have been
  characterised by McBride and Paterson as those equipped with a
  distributive law over arbitrary applicative functors; however, laws
  that fully capture the intuition behind traversals are missing. This
  article is an attempt to remedy this situation by proposing laws for
  characterising traversals that capture the intuition behind them. To
  support our claims, we prove that finitary containers are
  traversable in our sense and argue that elements in a traversable
  structure are visited exactly once.

\end{abstract}

\section{Introduction}
\label{sec:introduction}
Traversals of data structures are ubiquitous in
programming. Consequently, it is essential for the writer of
mathematically structured programs to have a precise understanding of
the abstract structure of traversals and of their algebraic properties.

There are many notions of traversal, but in this article, we will
focus on the traversals of functors $G : \Set\to\Set$ as given by a
family of natural transformations $\delta^F: GF \to FG$ over
applicative functors
$F$~\cite{mcbride08:applicative-programming}. This notion of traversal
is quite abstract but practical for structuring and reasoning about
programs~\cite{Gibbons.Oliveira.Iterator,DJPTFP:2011,backwards}, and
encompasses other notions of traversals, such as generating a list of
elements.

Usually, distributive laws are required to respect structure via
coherence laws~\cite{Rypacek:2010}. However, the characterisation of
traversability in terms of a distributive law given by McBride and
Paterson is incomplete as no coherence laws are required to hold and
one can write distributive laws that do not follow the intuition of
what a traversal should be. Consequently, a distributive law is not
enough and the definition of a traversal needs to be strengthened to
avoid bad instances. However, there seems to be no consensus of
precisely what a traversal should be.

Moggi et al.~\cite{MoggiBJ99} define traversals of a functor $G$ over
a functor $F$ to be a family over $G1$ of distributive laws
$\delta^F_{s:G1} : G_sF\to FG_s$. Here, $G_s$ is a family over $G1$ of
functors obtained from the pullback of the diagram $\xymatrix@1{1
  \ar[r]^s & G1 & \ar[l]_{G!}  GX}$. The family of morphisms provides
an easy way to express shape preservation (the shape of $G$ before
traversing the structure is the same as the shape of $G$ after
traversing it). Nevertheless, we find this notion of traversability to
be a bit unsatisfactory since it allows traversals that go over the
same element more than once.
Gibbons and Oliveira~\cite{Gibbons.Oliveira.Iterator} proposed many
properties that should hold for traversals, but they did not seek to
obtain a lawful definition of traversability. Furthermore, they failed to
recognize the law that would prevent traversing over an element twice.

The main contribution of this article is to establish coherence laws
for the distributive law that capture the intuition of traversals. The
laws can be expressed by simple equations; they are shown to hold for
the largest class known of traversable functors, namely finitary
containers; and they are shown to prohibit the ``bad'' traversals
identified by Gibbons and Oliveira. Additionally, we complete the
Haskell traversable class so that it contains the three ways in which
distributive laws can be expressed~\cite{beck:distributivity,
  barr:beck-distributivity}.  Furthermore, we characterize traversability
categorically in two ways: as a 2-functor between particular
2-categories, and as a distributive law over a monoidal action. 



The article is organised as follows. Section~\ref{sec:Haskell} is
written with the Haskell programmer in mind; we review applicative
and traversable functors, and we motivate and present the proposed
laws.
We work in the category of sets and total functions from
section~\ref{sec:traversable-functors} onwards, where we show that
finitary containers are traversable in our sense.
In section~\ref{sec:law-analysis} we analyze some consequences of the
laws and we argue that the laws imply that every position in the
structure is visited exactly once. Additionally, a categorical interpretation
of traversability is given.
Finally, in section~\ref{sec:conclusion} we
conclude and discuss future work.

\makeatletter
\@ifundefined{lhs2tex.lhs2tex.sty.read}%
  {\@namedef{lhs2tex.lhs2tex.sty.read}{}%
   \newcommand\SkipToFmtEnd{}%
   \newcommand\EndFmtInput{}%
   \long\def\SkipToFmtEnd#1\EndFmtInput{}%
  }\SkipToFmtEnd

\newcommand\ReadOnlyOnce[1]{\@ifundefined{#1}{\@namedef{#1}{}}\SkipToFmtEnd}
\usepackage{amstext}
\usepackage{amssymb}
\usepackage{stmaryrd}
\DeclareFontFamily{OT1}{cmtex}{}
\DeclareFontShape{OT1}{cmtex}{m}{n}
  {<5><6><7><8>cmtex8
   <9>cmtex9
   <10><10.95><12><14.4><17.28><20.74><24.88>cmtex10}{}
\DeclareFontShape{OT1}{cmtex}{m}{it}
  {<-> ssub * cmtt/m/it}{}
\newcommand{\texfamily}{\fontfamily{cmtex}\selectfont}
\DeclareFontShape{OT1}{cmtt}{bx}{n}
  {<5><6><7><8>cmtt8
   <9>cmbtt9
   <10><10.95><12><14.4><17.28><20.74><24.88>cmbtt10}{}
\DeclareFontShape{OT1}{cmtex}{bx}{n}
  {<-> ssub * cmtt/bx/n}{}
\newcommand{\tex}[1]{\text{\texfamily#1}}	

\newcommand{\Sp}{\hskip.33334em\relax}

\newcommand{\Conid}[1]{\mathit{#1}}
\newcommand{\Varid}[1]{\mathit{#1}}
\newcommand{\anonymous}{\kern0.06em \vbox{\hrule\@width.5em}}
\newcommand{\plus}{\mathbin{+\!\!\!+}}
\newcommand{\bind}{\mathbin{>\!\!\!>\mkern-6.7mu=}}
\newcommand{\rbind}{\mathbin{=\mkern-6.7mu<\!\!\!<}}
\newcommand{\sequ}{\mathbin{>\!\!\!>}}
\renewcommand{\leq}{\leqslant}
\renewcommand{\geq}{\geqslant}
\usepackage{polytable}

\@ifundefined{mathindent}%
  {\newdimen\mathindent\mathindent\leftmargini}%
  {}%

\def\resethooks{%
  \global\let\SaveRestoreHook\empty
  \global\let\ColumnHook\empty}
\newcommand*{\savecolumns}[1][default]%
  {\g@addto@macro\SaveRestoreHook{\savecolumns[#1]}}
\newcommand*{\restorecolumns}[1][default]%
  {\g@addto@macro\SaveRestoreHook{\restorecolumns[#1]}}
\newcommand*{\aligncolumn}[2]%
  {\g@addto@macro\ColumnHook{\column{#1}{#2}}}

\resethooks

\newcommand{\onelinecommentchars}{\quad-{}- }
\newcommand{\commentbeginchars}{\enskip\{-}
\newcommand{\commentendchars}{-\}\enskip}

\newcommand{\visiblecomments}{%
  \let\onelinecomment=\onelinecommentchars
  \let\commentbegin=\commentbeginchars
  \let\commentend=\commentendchars}

\newcommand{\invisiblecomments}{%
  \let\onelinecomment=\empty
  \let\commentbegin=\empty
  \let\commentend=\empty}

\visiblecomments

\newlength{\blanklineskip}
\setlength{\blanklineskip}{0.66084ex}

\newcommand{\hsindent}[1]{\quad}
\let\hspre\empty
\let\hspost\empty
\newcommand{\NB}{\textbf{NB}}
\newcommand{\Todo}[1]{$\langle$\textbf{To do:}~#1$\rangle$}

\EndFmtInput
\makeatother
%
%
%
%
%
%
%
%
%
\ReadOnlyOnce{polycode.fmt}%
\makeatletter

\newcommand{\hsnewpar}[1]%
  {{\parskip=0pt\parindent=0pt\par\vskip #1\noindent}}

\newcommand{\hscodestyle}{}


\newcommand{\sethscode}[1]%
  {\expandafter\let\expandafter\hscode\csname #1\endcsname
   \expandafter\let\expandafter\endhscode\csname end#1\endcsname}


\newenvironment{compathscode}%
  {\par\noindent
   \advance\leftskip\mathindent
   \hscodestyle
   \let\\=\@normalcr
   \let\hspre\(\let\hspost\)%
   \pboxed}%
  {\endpboxed\)%
   \par\noindent
   \ignorespacesafterend}

\newcommand{\compaths}{\sethscode{compathscode}}


\newenvironment{plainhscode}%
  {\hsnewpar\abovedisplayskip
   \advance\leftskip\mathindent
   \hscodestyle
   \let\hspre\(\let\hspost\)%
   \pboxed}%
  {\endpboxed%
   \hsnewpar\belowdisplayskip
   \ignorespacesafterend}

\newenvironment{oldplainhscode}%
  {\hsnewpar\abovedisplayskip
   \advance\leftskip\mathindent
   \hscodestyle
   \let\\=\@normalcr
   \(\pboxed}%
  {\endpboxed\)%
   \hsnewpar\belowdisplayskip
   \ignorespacesafterend}


\newcommand{\plainhs}{\sethscode{plainhscode}}
\newcommand{\oldplainhs}{\sethscode{oldplainhscode}}
\plainhs


\newenvironment{arrayhscode}%
  {\hsnewpar\abovedisplayskip
   \advance\leftskip\mathindent
   \hscodestyle
   \let\\=\@normalcr
   \(\parray}%
  {\endparray\)%
   \hsnewpar\belowdisplayskip
   \ignorespacesafterend}

\newcommand{\arrayhs}{\sethscode{arrayhscode}}


\newenvironment{mathhscode}%
  {\parray}{\endparray}

\newcommand{\mathhs}{\sethscode{mathhscode}}


\newenvironment{texthscode}%
  {\(\parray}{\endparray\)}

\newcommand{\texths}{\sethscode{texthscode}}


\def\codeframewidth{\arrayrulewidth}
\RequirePackage{calc}

\newenvironment{framedhscode}%
  {\parskip=\abovedisplayskip\par\noindent
   \hscodestyle
   \arrayrulewidth=\codeframewidth
   \tabular{@{}|p{\linewidth-2\arraycolsep-2\arrayrulewidth-2pt}|@{}}%
   \hline\framedhslinecorrect\\{-1.5ex}%
   \let\endoflinesave=\\
   \let\\=\@normalcr
   \(\pboxed}%
  {\endpboxed\)%
   \framedhslinecorrect\endoflinesave{.5ex}\hline
   \endtabular
   \parskip=\belowdisplayskip\par\noindent
   \ignorespacesafterend}

\newcommand{\framedhslinecorrect}[2]%
  {#1[#2]}

\newcommand{\framedhs}{\sethscode{framedhscode}}


\newenvironment{inlinehscode}%
  {\(\def\column##1##2{}%
   \let\>\undefined\let\<\undefined\let\\\undefined
   \newcommand\>[1][]{}\newcommand\<[1][]{}\newcommand\\[1][]{}%
   \def\fromto##1##2##3{##3}%
   \def\nextline{}}{\) }%

\newcommand{\inlinehs}{\sethscode{inlinehscode}}


\newenvironment{joincode}%
  {\let\orighscode=\hscode
   \let\origendhscode=\endhscode
   \def\endhscode{\def\hscode{\endgroup\def\@currenvir{hscode}\\}\begingroup}
   \orighscode\def\hscode{\endgroup\def\@currenvir{hscode}}}%
  {\origendhscode
   \global\let\hscode=\orighscode
   \global\let\endhscode=\origendhscode}%

\makeatother
\EndFmtInput

\section{Traversals in Haskell}
\label{sec:Haskell}

In this section we give our motivation for and introduce our proposed laws for
traversals using the functional language Haskell. We give an intuition
of why the laws are reasonable but we defer a more rigorous
explanation to the following sections.

Intuitively, a traversal of a data structure is a function that
collects all elements in a data structure in a given order. 
A more abstract formulation was proposed by Moggi et
al.~\cite{MoggiBJ99} where a traversal is a distributive law of a
functor (representing the data structure in question) over an
arbitrary monad. McBride and Paterson extended the notion of
traversability to be a distributive law of a functor over an arbitrary
applicative functor~\cite{mcbride08:applicative-programming}.  As we
will see next, the notion of a distributive law is general enough to
include the basic notion of a function listing the elements of a data structure.

\subsection{Applicative Functors}

An applicative functor~\cite{mcbride08:applicative-programming} is an
instance of the class

\begin{hscode}\SaveRestoreHook
\column{B}{@{}>{\hspre}l<{\hspost}@{}}%
\column{5}{@{}>{\hspre}l<{\hspost}@{}}%
\column{12}{@{}>{\hspre}l<{\hspost}@{}}%
\column{E}{@{}>{\hspre}l<{\hspost}@{}}%
\>[B]{}\mathbf{class}\;\Conid{Functor}\;\Varid{f}\Rightarrow \Conid{Applicative}\;\Varid{f}\;\mathbf{where}{}\<[E]%
\\
\>[B]{}\hsindent{5}{}\<[5]%
\>[5]{}\Varid{pure}{}\<[12]%
\>[12]{}\mathbin{::}\Varid{x}\to \Varid{f}\;\Varid{x}{}\<[E]%
\\
\>[B]{}\hsindent{5}{}\<[5]%
\>[5]{}(\oast){}\<[12]%
\>[12]{}\mathbin{::}\Varid{f}\;(\Varid{a}\to \Varid{b})\to \Varid{f}\;\Varid{a}\to \Varid{f}\;\Varid{b}{}\<[E]%
\ColumnHook
\end{hscode}\resethooks
such that the following coherence conditions hold.

\begin{hscode}\SaveRestoreHook
\column{B}{@{}>{\hspre}l<{\hspost}@{}}%
\column{17}{@{}>{\hspre}l<{\hspost}@{}}%
\column{31}{@{}>{\hspre}l<{\hspost}@{}}%
\column{60}{@{}>{\hspre}l<{\hspost}@{}}%
\column{E}{@{}>{\hspre}l<{\hspost}@{}}%
\>[B]{}\mathbf{identity}\;{}\<[17]%
\>[17]{}\qquad\qquad\;{}\<[31]%
\>[31]{}\Varid{pure}\;\Varid{id}\oast\Varid{u}{}\<[60]%
\>[60]{}\mathrel{=}\Varid{u}{}\<[E]%
\\
\>[B]{}\mathbf{composition}\;{}\<[31]%
\>[31]{}\Varid{pure}\;(\cdot)\oast\Varid{u}\oast\Varid{v}\oast\Varid{w}{}\<[60]%
\>[60]{}\mathrel{=}\Varid{u}\oast(\Varid{v}\oast\Varid{w}){}\<[E]%
\\
\>[B]{}\mathbf{homomorphism}\;{}\<[17]%
\>[17]{}\qquad\qquad\;{}\<[31]%
\>[31]{}\Varid{pure}\;\Varid{g}\oast\Varid{pure}\;\Varid{x}{}\<[60]%
\>[60]{}\mathrel{=}\Varid{pure}\;(\Varid{g}\;\Varid{x}){}\<[E]%
\\
\>[B]{}\mathbf{interchange}\;{}\<[31]%
\>[31]{}\Varid{u}\oast\Varid{pure}\;\Varid{x}{}\<[60]%
\>[60]{}\mathrel{=}\Varid{pure}\;(\lambda \Varid{g}\to \Varid{g}\;\Varid{x})\oast\Varid{u}{}\<[E]%
\ColumnHook
\end{hscode}\resethooks

Every monad is an applicative functor, but applicative functors are
more general. For example, every monoid determines an applicative
functor (which is not a monad):

\begin{hscode}\SaveRestoreHook
\column{B}{@{}>{\hspre}l<{\hspost}@{}}%
\column{6}{@{}>{\hspre}l<{\hspost}@{}}%
\column{15}{@{}>{\hspre}l<{\hspost}@{}}%
\column{E}{@{}>{\hspre}l<{\hspost}@{}}%
\>[B]{}\mathbf{newtype}\;\Conid{K}\;\Varid{a}\;\Varid{b}\mathrel{=}\Conid{K}\;\{\mskip1.5mu \Varid{unK}\mathbin{::}\Varid{a}\mskip1.5mu\}{}\<[E]%
\\[\blanklineskip]%
\>[B]{}\mathbf{instance}\;\Conid{Monoid}\;\Varid{a}\Rightarrow \Conid{Applicative}\;(\Conid{K}\;\Varid{a})\;\mathbf{where}{}\<[E]%
\\
\>[B]{}\hsindent{6}{}\<[6]%
\>[6]{}\Varid{pure}\;\Varid{x}{}\<[15]%
\>[15]{}\mathrel{=}\Conid{K}\;\emptyset{}\<[E]%
\\
\>[B]{}\hsindent{6}{}\<[6]%
\>[6]{}\Varid{f}\oast\Varid{x}{}\<[15]%
\>[15]{}\mathrel{=}\Conid{K}\;(\Varid{unK}\;\Varid{f}\oplus\Varid{unK}\;\Varid{x}){}\<[E]%
\ColumnHook
\end{hscode}\resethooks
where \ensuremath{\emptyset} is the monoid unit, and $\oplus$ is the monoid multiplication.


Applicative functors are closed under identity and
composition\footnote{For brevity, throughout the article we omit the required \ensuremath{\Conid{Functor}}
  instances. The complete source code can be found at \url{http://www.fceia.unr.edu.ar/~mauro/}. }.

\begin{hscode}\SaveRestoreHook
\column{B}{@{}>{\hspre}l<{\hspost}@{}}%
\column{5}{@{}>{\hspre}l<{\hspost}@{}}%
\column{6}{@{}>{\hspre}l<{\hspost}@{}}%
\column{16}{@{}>{\hspre}l<{\hspost}@{}}%
\column{E}{@{}>{\hspre}l<{\hspost}@{}}%
\>[B]{}\mathbf{newtype}\;\Conid{Id}\;\Varid{a}\mathrel{=}\Conid{Id}\;\{\mskip1.5mu \Varid{unId}\mathbin{::}\Varid{a}\mskip1.5mu\}{}\<[E]%
\\[\blanklineskip]%
\>[B]{}\mathbf{instance}\;\Conid{Applicative}\;\Conid{Id}\;\mathbf{where}{}\<[E]%
\\
\>[B]{}\hsindent{6}{}\<[6]%
\>[6]{}\Varid{pure}{}\<[16]%
\>[16]{}\mathrel{=}\Conid{Id}{}\<[E]%
\\
\>[B]{}\hsindent{6}{}\<[6]%
\>[6]{}\Varid{f}\oast\Varid{x}{}\<[16]%
\>[16]{}\mathrel{=}\Conid{Id}\;(\Varid{unId}\;\Varid{f}\;(\Varid{unId}\;\Varid{x})){}\<[E]%
\\[\blanklineskip]%
\>[B]{}\mathbf{newtype}\;\Conid{C}\;\Varid{f}\;\Varid{g}\;\Varid{a}\mathrel{=}\Conid{Comp}\;\{\mskip1.5mu \Varid{unC}\mathbin{::}\Varid{f}\;(\Varid{g}\;\Varid{a})\mskip1.5mu\}{}\<[E]%
\\[\blanklineskip]%
\>[B]{}\mathbf{instance}\;(\Conid{Applicative}\;\Varid{f},\Conid{Applicative}\;\Varid{g})\Rightarrow \Conid{Applicative}\;(\Conid{C}\;\Varid{f}\;\Varid{g})\;\mathbf{where}{}\<[E]%
\\
\>[B]{}\hsindent{5}{}\<[5]%
\>[5]{}\Varid{pure}{}\<[16]%
\>[16]{}\mathrel{=}\Conid{Comp}\cdot\Varid{pure}\cdot\Varid{pure}{}\<[E]%
\\
\>[B]{}\hsindent{5}{}\<[5]%
\>[5]{}\Varid{f}\oast\Varid{x}{}\<[16]%
\>[16]{}\mathrel{=}\Conid{Comp}\;(\Varid{pure}\;(\oast)\oast\Varid{unC}\;\Varid{f}\oast\Varid{unC}\;\Varid{x}){}\<[E]%
\ColumnHook
\end{hscode}\resethooks

\subsection{The class of Traversable functors}

McBride and Paterson propose a traversal to be a distributive law of a
functor over all applicative
functors~\cite{mcbride08:applicative-programming}.  Hence, the class
of traversable functors is defined:

\begin{hscode}\SaveRestoreHook
\column{B}{@{}>{\hspre}l<{\hspost}@{}}%
\column{5}{@{}>{\hspre}l<{\hspost}@{}}%
\column{14}{@{}>{\hspre}l<{\hspost}@{}}%
\column{17}{@{}>{\hspre}l<{\hspost}@{}}%
\column{E}{@{}>{\hspre}l<{\hspost}@{}}%
\>[B]{}\mathbf{class}\;\Conid{Functor}\;\Varid{t}\Rightarrow \Conid{Traversable}\;\Varid{t}\;\mathbf{where}{}\<[E]%
\\
\>[B]{}\hsindent{5}{}\<[5]%
\>[5]{}\Varid{traverse}\mathbin{::}\Conid{Applicative}\;\Varid{f}\Rightarrow (\Varid{a}\to \Varid{f}\;\Varid{b})\to \Varid{t}\;\Varid{a}\to \Varid{f}\;(\Varid{t}\;\Varid{b}){}\<[E]%
\\
\>[B]{}\hsindent{5}{}\<[5]%
\>[5]{}\Varid{dist}{}\<[14]%
\>[14]{}\mathbin{::}\Conid{Applicative}\;\Varid{f}\Rightarrow \Varid{t}\;(\Varid{f}\;\Varid{a})\to \Varid{f}\;(\Varid{t}\;\Varid{a}){}\<[E]%
\\[\blanklineskip]%
\>[B]{}\hsindent{5}{}\<[5]%
\>[5]{}\Varid{traverse}\;\Varid{f}{}\<[17]%
\>[17]{}\mathrel{=}\Varid{dist}\cdot\Varid{fmap}\;\Varid{f}{}\<[E]%
\\
\>[B]{}\hsindent{5}{}\<[5]%
\>[5]{}\Varid{dist}{}\<[17]%
\>[17]{}\mathrel{=}\Varid{traverse}\;\Varid{id}{}\<[E]%
\ColumnHook
\end{hscode}\resethooks

A minimal instance should provide a definition of either \ensuremath{\Varid{traverse}} or
\ensuremath{\Varid{dist}}, as one can be defined by the other, as shown by the default
instances above.

\begin{example} The canonical example of a traversable functor is the
  list functor.
\begin{hscode}\SaveRestoreHook
\column{B}{@{}>{\hspre}l<{\hspost}@{}}%
\column{5}{@{}>{\hspre}l<{\hspost}@{}}%
\column{19}{@{}>{\hspre}l<{\hspost}@{}}%
\column{E}{@{}>{\hspre}l<{\hspost}@{}}%
\>[B]{}\mathbf{instance}\;\Conid{Traversable}\;[\mskip1.5mu \mskip1.5mu]\;\mathbf{where}{}\<[E]%
\\
\>[B]{}\hsindent{5}{}\<[5]%
\>[5]{}\Varid{dist}\;[\mskip1.5mu \mskip1.5mu]{}\<[19]%
\>[19]{}\mathrel{=}\Varid{pure}\;[\mskip1.5mu \mskip1.5mu]{}\<[E]%
\\
\>[B]{}\hsindent{5}{}\<[5]%
\>[5]{}\Varid{dist}\;(\Varid{x}\mathbin{:}\Varid{xs}){}\<[19]%
\>[19]{}\mathrel{=}\Varid{pure}\;(\mathbin{:})\oast\Varid{x}\oast\Varid{dist}\;\Varid{xs}{}\<[E]%
\ColumnHook
\end{hscode}\resethooks
 
Another typical example is that of binary trees with information in
the nodes:
\begin{hscode}\SaveRestoreHook
\column{B}{@{}>{\hspre}l<{\hspost}@{}}%
\column{5}{@{}>{\hspre}l<{\hspost}@{}}%
\column{24}{@{}>{\hspre}l<{\hspost}@{}}%
\column{E}{@{}>{\hspre}l<{\hspost}@{}}%
\>[B]{}\mathbf{data}\;\Conid{Bin}\;\Varid{a}\mathrel{=}\Conid{Leaf}\mid \Conid{Node}\;(\Conid{Bin}\;\Varid{a})\;\Varid{a}\;(\Conid{Bin}\;\Varid{a}){}\<[E]%
\\[\blanklineskip]%
\>[B]{}\mathbf{instance}\;\Conid{Traversable}\;\Conid{Bin}\;\mathbf{where}{}\<[E]%
\\
\>[B]{}\hsindent{5}{}\<[5]%
\>[5]{}\Varid{dist}\;\Conid{Leaf}{}\<[24]%
\>[24]{}\mathrel{=}\Varid{pure}\;\Conid{Leaf}{}\<[E]%
\\
\>[B]{}\hsindent{5}{}\<[5]%
\>[5]{}\Varid{dist}\;(\Conid{Node}\;\Varid{l}\;\Varid{a}\;\Varid{r}){}\<[24]%
\>[24]{}\mathrel{=}\Varid{pure}\;\Conid{Node}\oast\Varid{dist}\;\Varid{l}\oast\Varid{a}\oast\Varid{dist}\;\Varid{r}{}\<[E]%
\ColumnHook
\end{hscode}\resethooks

Our last example is the identity functor.
\begin{hscode}\SaveRestoreHook
\column{B}{@{}>{\hspre}l<{\hspost}@{}}%
\column{5}{@{}>{\hspre}l<{\hspost}@{}}%
\column{18}{@{}>{\hspre}l<{\hspost}@{}}%
\column{E}{@{}>{\hspre}l<{\hspost}@{}}%
\>[B]{}\mathbf{instance}\;\Conid{Traversable}\;\Conid{Id}\;\mathbf{where}{}\<[E]%
\\
\>[B]{}\hsindent{5}{}\<[5]%
\>[5]{}\Varid{dist}\;(\Conid{Id}\;\Varid{x}){}\<[18]%
\>[18]{}\mathrel{=}\Varid{fmap}\;\Conid{Id}\;\Varid{x}{}\<[E]%
\ColumnHook
\end{hscode}\resethooks

\end{example}


\begin{remark}
  Given a \ensuremath{\Conid{Traversable}} functor, it is possible to construct a list of
  its elements by traversing it over an accumulator, as done by the
  function \ensuremath{\Varid{toList}}:

\begin{hscode}\SaveRestoreHook
\column{B}{@{}>{\hspre}l<{\hspost}@{}}%
\column{6}{@{}>{\hspre}l<{\hspost}@{}}%
\column{9}{@{}>{\hspre}l<{\hspost}@{}}%
\column{13}{@{}>{\hspre}l<{\hspost}@{}}%
\column{21}{@{}>{\hspre}l<{\hspost}@{}}%
\column{E}{@{}>{\hspre}l<{\hspost}@{}}%
\>[B]{}\Varid{toList}{}\<[9]%
\>[9]{}\mathbin{::}\Conid{Traversable}\;\Varid{t}\Rightarrow \Varid{t}\;\Varid{a}\to [\mskip1.5mu \Varid{a}\mskip1.5mu]{}\<[E]%
\\
\>[B]{}\Varid{toList}{}\<[9]%
\>[9]{}\mathrel{=}\Varid{unK}\cdot\Varid{dist}\cdot\Varid{fmap}\;\Varid{wrap}{}\<[E]%
\\
\>[B]{}\hsindent{6}{}\<[6]%
\>[6]{}\mathbf{where}\;{}\<[13]%
\>[13]{}\Varid{wrap}{}\<[21]%
\>[21]{}\mathbin{::}\Varid{x}\to \Conid{K}\;[\mskip1.5mu \Varid{x}\mskip1.5mu]\;\Varid{a}{}\<[E]%
\\
\>[13]{}\Varid{wrap}\;\Varid{x}{}\<[21]%
\>[21]{}\mathrel{=}\Conid{K}\;[\mskip1.5mu \Varid{x}\mskip1.5mu]{}\<[E]%
\ColumnHook
\end{hscode}\resethooks
\end{remark}

\subsection{The need for laws}
\label{sec:need_for_laws}
Notice that no coherence conditions are required to hold for instances
of \ensuremath{\Varid{traverse}}
(or \ensuremath{\Varid{dist}}) in the class definition above.  Hence, it is possible to
instantiate definitions of \ensuremath{\Varid{dist}} that do not follow our intuition of
what a traversal should be.
%
Consider the following functions implementing a distributive law
between lists and an arbitrary applicative functor.

\begin{hscode}\SaveRestoreHook
\column{B}{@{}>{\hspre}l<{\hspost}@{}}%
\column{24}{@{}>{\hspre}l<{\hspost}@{}}%
\column{28}{@{}>{\hspre}l<{\hspost}@{}}%
\column{E}{@{}>{\hspre}l<{\hspost}@{}}%
\>[B]{}\Varid{distL},\Varid{distL'},\Varid{distL''}{}\<[24]%
\>[24]{}\mathbin{::}\Conid{Applicative}\;\Varid{f}\Rightarrow [\mskip1.5mu \Varid{f}\;\Varid{a}\mskip1.5mu]\to \Varid{f}\;[\mskip1.5mu \Varid{a}\mskip1.5mu]{}\<[E]%
\\
\>[B]{}\Varid{distL}\;\anonymous {}\<[24]%
\>[24]{}\mathrel{=}\Varid{pure}\;[\mskip1.5mu \mskip1.5mu]{}\<[E]%
\\[\blanklineskip]%
\>[B]{}\Varid{distL'}\;[\mskip1.5mu \mskip1.5mu]{}\<[24]%
\>[24]{}\mathrel{=}\Varid{pure}\;[\mskip1.5mu \mskip1.5mu]{}\<[E]%
\\
\>[B]{}\Varid{distL'}\;[\mskip1.5mu \Varid{x}\mskip1.5mu]{}\<[24]%
\>[24]{}\mathrel{=}\Varid{pure}\;[\mskip1.5mu \mskip1.5mu]{}\<[E]%
\\
\>[B]{}\Varid{distL'}\;(\Varid{x}\mathbin{:}\Varid{y}\mathbin{:}\Varid{xs}){}\<[24]%
\>[24]{}\mathrel{=}\Varid{pure}\;(\mathbin{:})\oast\Varid{x}\oast\Varid{distL'}\;(\Varid{y}\mathbin{:}\Varid{xs}){}\<[E]%
\\[\blanklineskip]%
\>[B]{}\Varid{distL''}\;[\mskip1.5mu \mskip1.5mu]{}\<[24]%
\>[24]{}\mathrel{=}\Varid{pure}\;[\mskip1.5mu \mskip1.5mu]{}\<[E]%
\\
\>[B]{}\Varid{distL''}\;(\Varid{x}\mathbin{:}\Varid{xs}){}\<[24]%
\>[24]{}\mathrel{=}\Varid{pure}\;(\mathbin{:})\oast\Varid{x'}\oast\Varid{distL''}\;\Varid{xs}{}\<[E]%
\\
\>[24]{}\hsindent{4}{}\<[28]%
\>[28]{}\mathbf{where}\;\Varid{x'}\mathrel{=}\Varid{pure}\;(\lambda \Varid{x}\;\Varid{y}\to \Varid{x})\oast\Varid{x}\oast\Varid{x}{}\<[E]%
\ColumnHook
\end{hscode}\resethooks
The functions have the correct type but they are not traversals: the
first one (\ensuremath{\Varid{distL}}) does
not even try to traverse the list, the second one (\ensuremath{\Varid{distL'}}) does not
visit the last element of the list, and the last one (\ensuremath{\Varid{distL''}}) collects
each applicative effect twice (but not the data).
Consequently, in order for the \ensuremath{\Conid{Traversable}} class to capture the
intuition behind traversals, it needs to require certain laws to hold
in order to prohibit definitions such as those of the three functions
above.

\subsection{Reformulation of the \ensuremath{\Conid{Traversable}} class}

We conclude this section with our proposed class of Traversable functors. 
It differs from the previous one in two aspects.
\begin{itemize}
\item In addition to \ensuremath{\Varid{traverse}} and \ensuremath{\Varid{dist}}, a minimal instance can
  also be given by defining a function
 \ensuremath{\Varid{consume}\mathbin{::}\Conid{Applicative}\;\Varid{f}\Rightarrow (\Varid{t}\;\Varid{a}\to \Varid{b})\to \Varid{t}\;(\Varid{f}\;\Varid{a})\to \Varid{f}\;\Varid{b}}.
\item It requires two laws to hold. 
\end{itemize}

We define the class of traversable functors to be:

\begin{hscode}\SaveRestoreHook
\column{B}{@{}>{\hspre}l<{\hspost}@{}}%
\column{5}{@{}>{\hspre}l<{\hspost}@{}}%
\column{15}{@{}>{\hspre}l<{\hspost}@{}}%
\column{18}{@{}>{\hspre}l<{\hspost}@{}}%
\column{E}{@{}>{\hspre}l<{\hspost}@{}}%
\>[B]{}\mathbf{class}\;\Conid{Functor}\;\Varid{t}\Rightarrow \Conid{Traversable}\;\Varid{t}\;\mathbf{where}{}\<[E]%
\\
\>[B]{}\hsindent{5}{}\<[5]%
\>[5]{}\Varid{traverse}\mathbin{::}\Conid{Applicative}\;\Varid{f}\Rightarrow (\Varid{a}\to \Varid{f}\;\Varid{b})\to \Varid{t}\;\Varid{a}\to \Varid{f}\;(\Varid{t}\;\Varid{b}){}\<[E]%
\\
\>[B]{}\hsindent{5}{}\<[5]%
\>[5]{}\Varid{dist}{}\<[15]%
\>[15]{}\mathbin{::}\Conid{Applicative}\;\Varid{f}\Rightarrow \Varid{t}\;(\Varid{f}\;\Varid{a})\to \Varid{f}\;(\Varid{t}\;\Varid{a}){}\<[E]%
\\
\>[B]{}\hsindent{5}{}\<[5]%
\>[5]{}\Varid{consume}{}\<[15]%
\>[15]{}\mathbin{::}\Conid{Applicative}\;\Varid{f}\Rightarrow (\Varid{t}\;\Varid{a}\to \Varid{b})\to \Varid{t}\;(\Varid{f}\;\Varid{a})\to \Varid{f}\;\Varid{b}{}\<[E]%
\\[\blanklineskip]%
\>[B]{}\hsindent{5}{}\<[5]%
\>[5]{}\Varid{traverse}\;\Varid{f}{}\<[18]%
\>[18]{}\mathrel{=}\Varid{dist}\cdot\Varid{fmap}\;\Varid{f}{}\<[E]%
\\
\>[B]{}\hsindent{5}{}\<[5]%
\>[5]{}\Varid{dist}{}\<[18]%
\>[18]{}\mathrel{=}\Varid{consume}\;\Varid{id}{}\<[E]%
\\
\>[B]{}\hsindent{5}{}\<[5]%
\>[5]{}\Varid{consume}\;\Varid{f}{}\<[18]%
\>[18]{}\mathrel{=}\Varid{fmap}\;\Varid{f}\cdot\Varid{traverse}\;\Varid{id}{}\<[E]%
\ColumnHook
\end{hscode}\resethooks

subject to the following laws:

\begin{hscode}\SaveRestoreHook
\column{B}{@{}>{\hspre}l<{\hspost}@{}}%
\column{5}{@{}>{\hspre}l<{\hspost}@{}}%
\column{20}{@{}>{\hspre}l<{\hspost}@{}}%
\column{32}{@{}>{\hspre}l<{\hspost}@{}}%
\column{51}{@{}>{\hspre}l<{\hspost}@{}}%
\column{E}{@{}>{\hspre}l<{\hspost}@{}}%
\>[5]{}\mathbf{Unitarity}\;{}\<[20]%
\>[20]{}\qquad\qquad\;{}\<[32]%
\>[32]{}\Varid{dist}\cdot\Varid{fmap}\;\Conid{Id}{}\<[51]%
\>[51]{}\mathrel{=}\Conid{Id}{}\<[E]%
\\
\>[5]{}\mathbf{Linearity}\;{}\<[20]%
\>[20]{}\qquad\qquad\;{}\<[32]%
\>[32]{}\Varid{dist}\cdot\Varid{fmap}\;\Conid{Comp}{}\<[51]%
\>[51]{}\mathrel{=}\Conid{Comp}\cdot\Varid{fmap}\;\Varid{dist}\cdot\Varid{dist}{}\<[E]%
\ColumnHook
\end{hscode}\resethooks

The same laws expressed in terms of \ensuremath{\Varid{traverse}} are:
\begin{hscode}\SaveRestoreHook
\column{B}{@{}>{\hspre}l<{\hspost}@{}}%
\column{5}{@{}>{\hspre}l<{\hspost}@{}}%
\column{35}{@{}>{\hspre}l<{\hspost}@{}}%
\column{E}{@{}>{\hspre}l<{\hspost}@{}}%
\>[5]{}\Varid{traverse}\;(\Conid{Id}\cdot\Varid{f}){}\<[35]%
\>[35]{}\mathrel{=}\Conid{Id}\cdot\Varid{fmap}\;\Varid{f}{}\<[E]%
\\
\>[5]{}\Varid{traverse}\;(\Conid{Comp}\cdot\Varid{fmap}\;\Varid{g}\cdot\Varid{f}){}\<[35]%
\>[35]{}\mathrel{=}\Conid{Comp}\cdot\Varid{fmap}\;(\Varid{traverse}\;\Varid{g})\cdot\Varid{traverse}\;\Varid{f}{}\<[E]%
\ColumnHook
\end{hscode}\resethooks

We leave as an exercise to the reader to express the laws in terms of \ensuremath{\Varid{consume}}.

Note that, since the identity functor is applicative, the laws for
traverse imply functoriality even without the type class
requirement. Hence an alternative definition would not require
traversable functors to be an instance of the \ensuremath{\Conid{Functor}} class, but
just to provide an instance of \ensuremath{\Varid{traverse}} subject to the two laws
above. In this case, providing definitions just for \ensuremath{\Varid{dist}} or
\ensuremath{\Varid{consume}} would not be enough.

In the following sections we show that these laws are reasonable for a
large class of functors, and
analyse how they prohibit wrong definitions such as \ensuremath{\Varid{distL}}, \ensuremath{\Varid{distL'}}
and \ensuremath{\Varid{distL''}}.



\section{Canonical Traverse for Finitary Containers}
\label{sec:traversable-functors}
Categorically, \emph{applicative functors} are functors
$\fF: \Set \larr \Set$, together with natural transformations:
\begin{align*}
\eta_X &~:~ X \larr \fF X&\text{(unit)}\\
\circledast_{X,Y} &~:~ \fF(\fun{X}{Y})\x \fF X \larr
  \fF Y&\text{(application under $\fF$)}&\enspace,
\end{align*}
where $\fun{X}{Y}$ is the function space (cartesian closure),
together with equations expressing basically that $\eta_{\fun{X}{Y}}$
injects pure functions $\fun{X}{Y}$ to functions under in
$\fF(\fun{X}{Y})$ and $\circledast$ respects this (see \cite{mcbride08:applicative-programming} for the
details). 

In category theory it is more convenient to work with the following equivalent definitions:
\begin{definition}[Applicative Functor]\label{def:appl-func}An \emph{applicative functor}
  is equivalently either of:
\begin{enumerate}[(i)]
\item\label{def:appl-func:it1} A functor $\fF: \Set \larr \Set$ which is lax monoidal with
  respect to the cartesian product and whose strength is coherent with
  the monoidal structure, which is to say that the following diagram
  commutes:
\[
\xymatrix@C+=2cm{(\fF X \x \fF Y) \x Z \ar[r]^{\alpha}\ar[d]_{\mu\x Z}&\fF X \x (\fF Y
  \x Z)\ar[r]^{\fF X \x \sigma} & \fF X \x \fF(Y \x Z)\ar[d]^{\mu}\\
  \fF(X \x Y) \x Z \ar[r]^{\sigma} &\fF((X \x Y) \x Z) \ar[r]^{\fF\alpha} & \fF(X \x (Y\x Z))}
\]
where $\sigma$ is strength, $\mu$ is the monoidal action and $\alpha$
is associativity. Note that all $\Set$ functors are strong so the key
requirement here is the coherence with the monoidal action. 

\item\label{def:appl-func:it2} A pointed lax monoidal functor $\fF : \Set \larr \Set$, where
  the unit of $\fF$, $\eta_X: X \larr \fF X$, coincides with the unit
  of the monoidal structure $\nu : 1 \larr \fF 1$ in that $\eta_1 =
  \nu$; and the multiplication $\mu_{X,Y} : \fF X \x \fF Y \larr \fF(X
  \x Y)$ is coherent with $\eta$ in the sense that 
\[\xymatrix{
 & X \x Y \ar[dl]_{\eta_X \x \eta_Y} \ar[dr]^{\eta_{X\x Y}}
\\
\fF X \x \fF Y \ar[rr]_{\mu_{X,Y}} & & \fF(X \x Y)}\]
commutes. 
\end{enumerate}
\end{definition}
\noindent
The equivalence of \eqref{def:appl-func:it1} and
\eqref{def:appl-func:it2} is straightforward.
%

\begin{definition}[Applicative Morphism]
  Let $F$ and $G$ be applicative functors. An \emph{applicative
    morphism} is a natural transformation $\tau : F \to G$ that
  respects the unit and multiplication. That is, a natural
  transformation $\tau$ such that the following diagrams commute.
\[
\xymatrix{
& X \ar[dl]_{\eta^F_X} \ar[dr]^{\eta^G_X}& \\
FX \ar[rr]_{\tau_X} & & GX
}
\qquad
\xymatrix@C+=2cm{
FX\times FY
    \ar[r]^{\mu^F_{X,Y}}
    \ar[d]_{\tau_X\times\tau_Y}
 &
F(X\times Y)
    \ar[d]^{\tau_{X\times Y}}
\\
GX\times GY 
  \ar[r]_{\mu^G_{X,Y}}
&
G(X\times Y)
}
\]  
\end{definition}

Applicative functors and applicative morphisms form a
category $\cA$.

The identity functor $\Id$ is an applicative functor, and composition
of applicative functors is applicative. Hence, applicative functors
form a (large) monoid, with functor composition $\circ$ as
multiplication and identity functor $\Id$ as unit.

In \cite{mcbride08:applicative-programming}, \emph{traversable
  functors} are characterised as those which \emph{distribute over}
all applicative functors. There, distributivity of a traversable $\fT$
over applicative $\fF$'s meant only the existence of natural
transformations of type $\fT \fF \Rightarrow \fF \fT $. However, without
further constraints this characterisation is too coarse. Here we
refine the notion of a traversable functor as follows: 
\begin{definition}[Traversable Functor]
A functor $\fT : \Set \larr \Set$  is said to be \emph{traversable}
if there is a family of natural transformations
\[\delta^\fF_X : \fT \fF X \larr \fF \fT X\]
natural in $\fF$ and respecting the monoidal structure of applicative
functor composition. Explicitly, for all applicative $\fF$, $\fG :
\Set \larr \Set$ and applicative morphisms $\alpha : \fF
\larr \fG$, the following diagrams of natural transformations commute:
\[
\begin{array}{ccc}
\xymatrix{ \fT\fF \ar[r]^{\delta^\fF}\ar[d]_{\fT\alpha} &\fF
  \fT \ar[d]^{\alpha_{\fT}}\\ \fT\fG \ar[r]_{\delta^{\fG}}&
  \fG\fT}
&  
\xymatrix{&\fF\fT\fG  \ar[dr]^{\fF\delta^{\fG}} \\\fT\fF\fG\ar[rr]_{\delta^{\fF\fG}} \ar[ur]^{\delta^F\fG}& & \fF\fG\fT}
&
\xymatrix{ \fT\Id \ar@/_1em/[r]_{\delta^\Id}\ar@/^1em/[r]^{\iid_\fT} &
  \Id\fT }\\
\text{naturality}& \text{linearity}&\text{unitarity}
\end{array}
\]
We sometimes call the family $\delta$ a \emph{traversal of $\fT$}.
\end{definition}
%


%
Next we introduce a class of functors, which are always traversable:
so-called\/ \emph{finitary containers}~\cite{alti:fossacs03}. 
\begin{definition}[Finitary Container]
 A \emph{finitary container} is given by 
 \begin{enumerate}[(i)]
 \item a set $S$ of \emph{shapes}
 \item an \emph{arity} $\mathsf{ar} : S \to \Nat$
 \end{enumerate}
To each container $(S,\mathsf{ar})$ one can assign a functor $\Ext(S,\mathsf{ar}) : \Set
\larr \Set$ called the
\emph{extension of $(S,\mathsf{ar})$} defined for each set $X$ as the set of (dependent) pairs
$(s,f)$ where $s \in S$ and $f \in X ^ {\mathsf{ar}\,s}$, where
$X^n$, for $n\in \Nat$, is the $n$-fold product $\underbrace{X \x \cdots
  \x X}_{n~\text{times}}$
\end{definition}
Finitary containers are also known as finitary \emph{dependent
  polynomial functors}~\cite{gambino04:wellfounded-trees} or functors
\emph{shapely over lists}~\cite{MoggiBJ99}. Moggi et
al.~\cite{MoggiBJ99} define a canonical traversal by monads for
shapely functors.  It is also indicated that this traversal could be
generalised from monads to all monoidal functors. Here we show that
all finitary containers (shapely functors) are traversable in our
sense. Explicitly: for each extension of a finitary container we
construct a natural family of distributive laws $\delta$ satisfying
\emph{linearity} and \emph{unitarity}.

As extensions of containers are
sums of products we proceed in stages. 
\begin{lemma}\label{lem:sums-are-traversable}
  Traversable functors are closed under arbitrary sums.
\end{lemma}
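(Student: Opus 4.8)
The plan is to take the coproduct pointwise. Given a family $(\fT_i)_{i\in I}$ of traversable functors with traversals $\delta_i^\fF : \fT_i\fF \larr \fF\fT_i$, set $\fT = \coprod_{i\in I}\fT_i$, so that $\fT X = \coprod_{i\in I}\fT_i X$ with coproduct injections $\iota_i : \fT_i \larr \fT$. The structural fact that makes everything work is that composition on the right distributes over the (pointwise) coproduct: $\fT\fF = \bigl(\coprod_i\fT_i\bigr)\fF = \coprod_i\bigl(\fT_i\fF\bigr)$, and the injection of the $j$-th summand of $\fT\fF$ is exactly $\iota_j\fF$, the whiskering of $\iota_j$ by $\fF$. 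Note that we never need $\fF$ to preserve coproducts --- the functor $\fF$ is only ever applied to \emph{morphisms} --- so plain functoriality of $\fF$ suffices.

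First I would define the candidate traversal by copairing:
\[
\delta^\fF \;=\; \bigl[\,\fF\iota_i \circ \delta_i^\fF\,\bigr]_{i\in I} \;:\; \coprod_{i\in I}\fT_i\fF \larr \fF\fT .
\]
Concretely, on the $j$-th summand $\delta^\fF$ applies $\delta_j^\fF$ and then injects along $\fF\iota_j$. Naturality in $X$ is immediate, since each $\fF\iota_i\circ\delta_i^\fF$ is natural and the copairing of natural transformations is again natural. For naturality in $\fF$, i.e.\ $\alpha_\fT\circ\delta^\fF = \delta^\fG\circ\fT\alpha$ for an applicative morphism $\alpha:\fF\larr\fG$, I would precompose with the injection $\iota_j\fF$; the left side becomes $\alpha_\fT\circ\fF\iota_j\circ\delta_j^\fF$, which by naturality of $\alpha$ equals $\fG\iota_j\circ\alpha_{\fT_j}\circ\delta_j^\fF$, and then naturality of $\delta_j$ in the applicative argument rewrites this as $\fG\iota_j\circ\delta_j^\fG\circ\fT_j\alpha$; the right side reduces to the same expression using $\fT\alpha\circ\iota_j\fF = \iota_j\fG\circ\fT_j\alpha$.

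The two coherence laws are verified the same way, summand by summand. For \emph{unitarity}, precomposing $\delta^\Id$ with $\iota_j$ gives $\Id\iota_j\circ\delta_j^\Id = \iota_j\circ\iid_{\fT_j} = \iota_j$ by unitarity of $\delta_j$, which is exactly $\iid_\fT\circ\iota_j$. For \emph{linearity}, precomposing both sides of $\delta^{\fF\fG} = \fF\delta^\fG\circ\delta^\fF\fG$ with $\iota_j\fF\fG$, the left side gives $\fF\fG\iota_j\circ\delta_j^{\fF\fG}$ by the definition of $\delta$ at the composite applicative functor $\fF\fG$. For the right side, the definition of $\delta^\fF$ turns $\delta^\fF\fG\circ\iota_j\fF\fG$ into $\fF\iota_j\fG\circ\delta_j^\fF\fG$, and then the key identity $\delta^\fG\circ\iota_j\fG = \fG\iota_j\circ\delta_j^\fG$ (again the definition of $\delta^\fG$, using that $\iota_j\fG$ is the summand injection of $\fT\fG$) lets me pull $\fF\delta^\fG$ through, yielding $\fF\fG\iota_j\circ\fF\delta_j^\fG\circ\delta_j^\fF\fG$. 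Linearity of $\delta_j$ collapses this to $\fF\fG\iota_j\circ\delta_j^{\fF\fG}$, matching the left side.

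The computations are all routine diagram chases, so the only real obstacle is bookkeeping: keeping straight the interplay between horizontal composition (whiskering) and the coproduct, in particular the repeated use of the identifications $\fT\fF = \coprod_i\fT_i\fF$ and of $\iota_j\fF$ as summand injection, and being careful that $\fF$ is applied only to morphisms. Since the copairing is determined by its restrictions to the summands and each law is an equation of natural transformations out of $\fT(\cdots)$, checking equality after precomposing with every injection $\iota_j$ suffices by the universal property of the coproduct.
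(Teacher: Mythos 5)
Your construction is exactly the paper's: the copairing $\bigl[\fF\iota_i\circ\delta_i^\fF\bigr]_i$ is the same map the paper writes as $\sum_s\delta_s$ followed by $[\fF(\mathsf{inj}_s)]_s$, and the paper likewise leaves naturality, unitarity and linearity as routine checks (which you carry out correctly by precomposing with each injection). Correct, and essentially identical to the paper's proof.
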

\begin{proof}
Let $\fT_s$, $s \in S$ be a family of traversable functors with
traversals $\delta_s$. We must show that $\sum_{s \in S}\fT_s$ is
traversable.  To this end we construct, for each $X$:
\[
  \xymatrix{
    \sum_{s\in S}\fT_s \fF X \ar[rr]^{\sum_{s\in S}\delta_s} &&
    \sum_{s\in S}\fF\fT_s X \ar[rr]^{[\,\fF(\mathsf{inj}_s\,\fT_s X)]_{s\in S}} && F\sum_{s\in S}\fT_s X}\enspace,
\]
where $\mathsf{inj}_s y = (s , y)$ and where $[~]_{s \in S}$ is
case splitting on $S$. Naturality, unitarity and linearity are all
easily checked. 
\end{proof}
\begin{lemma}\label{lem:prods-are-traversable}
   Traversable functors are closed under finite products. 
\end{lemma}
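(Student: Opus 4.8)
The plan is to reduce to the binary case and then verify the three laws of a traversal, the only substantial one being \emph{linearity}. Since every finite product is obtained from binary products and the empty product by an evident induction, it suffices to show (a) that the terminal functor $\Set \larr \Set$ sending every set to $1$ is traversable, and (b) that the product $\fT_1 \x \fT_2$ of two traversable functors $\fT_1,\fT_2$, with traversals $\delta_1,\delta_2$, is again traversable. Case (a) is immediate: here the role of $\fT\fF X$ is played by $1$ and that of $\fF\fT X$ by $\fF 1$, and the required family is the unit $\nu^\fF : 1 \larr \fF 1$ of the applicative functor; naturality is preservation of the unit by applicative morphisms, unitarity is $\nu^\Id = \iid$, and linearity is exactly the defining equation $\nu^{\fF\fG} = \fF\nu^\fG \circ \nu^\fF$ of the composite unit.

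For the binary case, observe that $(\fT_1\x\fT_2)\fF X = \fT_1\fF X \x \fT_2 \fF X$ while $\fF(\fT_1\x\fT_2)X = \fF(\fT_1 X \x \fT_2 X)$. I would define the candidate traversal by running the two traversals in parallel and then gluing with the lax monoidal multiplication $\mu$ of $\fF$:
\[
\delta^\fF \;=\; \mu^\fF \circ (\delta_1^\fF \x \delta_2^\fF) \;:\; \fT_1\fF X \x \fT_2\fF X \larr \fF(\fT_1 X \x \fT_2 X).
\]
Naturality in $\fF$ then follows by pasting the two naturality squares for $\delta_1,\delta_2$ with the square expressing that an applicative morphism $\alpha$ commutes with $\mu$; unitarity follows because $\mu^\Id$ is the identity and $\delta_1^\Id = \delta_2^\Id = \iid$, whence $\delta^\Id = \iid$.

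The main work, and the step I expect to be the real obstacle, is \emph{linearity}, i.e. $\delta^{\fF\fG} = \fF\delta^\fG \circ \delta^\fF\fG$. I would prove it by a diagram chase resting on three ingredients: the linearity of each factor (so $\delta_i^{\fF\fG} = \fF\delta_i^\fG \circ \delta_i^\fF\fG$), the description of the composite monoidal multiplication $\mu^{\fF\fG} = \fF\mu^\fG \circ \mu^\fF$, and --- the crucial point --- the \emph{naturality of $\mu^\fF$} as a transformation $\fF(-)\x\fF(-) \larr \fF(-\x-)$. Concretely, expanding $\delta^{\fF\fG}$ by its definition and substituting the linearity of $\delta_1,\delta_2$ together with the formula for $\mu^{\fF\fG}$ produces, after splitting the product of composites, the factor $\mu^\fF \circ (\fF\delta_1^\fG \x \fF\delta_2^\fG)$; naturality of $\mu^\fF$ rewrites this as $\fF(\delta_1^\fG \x \delta_2^\fG) \circ \mu^\fF$, and functoriality of $\fF$ then collapses $\fF\mu^\fG \circ \fF(\delta_1^\fG \x \delta_2^\fG)$ to $\fF\delta^\fG$, while the remaining factor is precisely the whiskered $\delta^\fF\fG$. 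Keeping the whiskerings and the objects at which each component is taken straight is the only delicate bookkeeping; notably, no appeal to the associativity coherence of $\fF$ beyond the naturality of $\mu$ is required.
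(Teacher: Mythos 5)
Your proof is correct and follows essentially the same route as the paper: both glue the component traversals together with the lax monoidal multiplication of $\fF$ (the paper's iterated $\mu^k$, with $\mu^0=\nu$ and $\mu^{k+1}=\mu\cdot(\Id_F\times\mu^k)$, is exactly your nullary/binary induction), and both reduce linearity to the identity $\mu^{\fF\fG}=\fF\mu^\fG\circ\mu^\fF$ together with naturality of $\mu^\fF$. The only difference is that you treat products of possibly distinct traversable functors, whereas the paper's proof is written for powers $\fT^S$ of a single functor, which is all that is needed for Theorem~\ref{thm:finitary-traversable}.
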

\begin{proof}
  Let $\delta$ be a traversal for $\fT$, $S$
  finite. Then $\fT^S$ is just iterated product $\fT \x \cdots \x \fT$
  , $|S|$-times, and therefore we can use multiplication $\mu$
  of any applicative functor $\fF$ to construct a $\delta^S : \fT^S
  \fF \Rightarrow \fF \fT^S$ by finite iteration. Namely, define
  $\mu^k : F^k X \to F (X^k)$, $k\in \Nat$ by $\mu^0 = \nu$, $\mu^1 =
  \Id_F$, $\mu^{k+1} = \mu \cdot (\Id_F \x \mu^k)$ and put $\delta^S =
  \mu^{|S|} \cdot \delta^{|S|}$. 
  Now naturality
  follows by naturality of everything in sight; unitarity from the fact
  that $\mu$ for the identity monoidal functor is just the
  identity. To see linearity just observe commutativity of the
  following for each $k\in \Nat$:
\[
\xymatrix{ & \fF(\fG X)^k \ar[dr]^{\fF(\mu^G)^k_X}
  \\
  (\fF\fG X)^k\ar[rr]_{(\fF\mu^G\cdot\mu^F)^k_X}
  \ar[ur]^{(\mu^F)^k_{GX}}& & \fF\fG (X^k)} \enspace.\] 
\end{proof}
\begin{theorem}\label{thm:finitary-traversable}
  All extensions of finitary containers are traversable.
\end{theorem}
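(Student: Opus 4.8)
The plan is to exhibit every extension $\Ext(S,\mathsf{ar})$ as a sum of finite products of copies of the identity functor, and then to read off traversability from the two closure lemmas already established. Unfolding the definition of the extension gives, naturally in $X$, the identification
\[
\Ext(S,\mathsf{ar})\,X \;\cong\; \sum_{s\in S} X^{\mathsf{ar}\,s}\enspace,
\]
so that $\Ext(S,\mathsf{ar})$ is (up to the chosen bracketing of the $n$-fold product) the functor $\sum_{s\in S}\Id^{\mathsf{ar}\,s}$, where each summand $\Id^{n}$ denotes the $n$-fold product of the identity functor with itself. It therefore suffices to establish three things: that $\Id$ is traversable, that each finite power $\Id^{n}$ is traversable, and that the resulting sum is traversable. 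The last two are immediate instances of Lemma~\ref{lem:prods-are-traversable} and Lemma~\ref{lem:sums-are-traversable}, so the only genuine work is the base case together with the bookkeeping that glues the pieces along the identification above.

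First I would dispatch the base case. For the identity functor one has $\Id\,\fF = \fF = \fF\,\Id$ for every applicative $\fF$, so the evident choice $\delta^{\fF} = \iid_{\fF}$ is a candidate traversal. Naturality in $\fF$ is immediate; unitarity holds because $\delta^{\Id} = \iid$ by construction; and linearity reduces to the trivial equality $\iid_{\fF\fG} = \fF\iid_{\fG}\cdot(\iid_{\fF}\fG)$. Hence $\Id$ is traversable. With this in hand, Lemma~\ref{lem:prods-are-traversable} makes each power $\Id^{n}$ traversable for every $n\in\Nat$; the degenerate case $n=0$, which yields the constant functor at $1$, is covered by the clause $\mu^{0}=\nu$ in the proof of that lemma, and corresponds precisely to a shape $s$ with $\mathsf{ar}\,s = 0$.

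Applying Lemma~\ref{lem:sums-are-traversable} to the family $\{\Id^{\mathsf{ar}\,s}\}_{s\in S}$ then produces a traversal on $\sum_{s\in S}\Id^{\mathsf{ar}\,s}$, which is exactly the extension functor. The one point requiring care is transporting traversability across the natural isomorphism above: given $\phi : \fT \larr \fT'$ with inverse $\phi^{-1}$ and a traversal $\delta$ on $\fT$, the conjugate family $\delta'^{\fF} = \fF\phi\cdot\delta^{\fF}\cdot(\phi^{-1}\fF)$ is a traversal on $\fT'$, with naturality, unitarity and linearity transferring by functoriality of $\fF$ and the interchange of $\phi$ with the coherence structure. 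I expect this conjugation check — confirming that the three coherence laws survive the identification — to be the only place needing genuine attention; the substance of the theorem lives entirely in the two lemmas and the trivial traversal on $\Id$, and the remainder is a direct assembly.
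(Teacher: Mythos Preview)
Your proposal is correct and follows essentially the same route as the paper: observe that extensions of finitary containers are sums of finite products and invoke Lemmas~\ref{lem:sums-are-traversable} and~\ref{lem:prods-are-traversable}. You are more explicit than the paper in spelling out the base case (traversability of $\Id$) and the transport of the traversal along the natural isomorphism $\Ext(S,\mathsf{ar}) \cong \sum_{s\in S}\Id^{\mathsf{ar}\,s}$, both of which the paper leaves implicit, but these are exactly the bookkeeping details one would expect to fill in.
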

\begin{proof}
  Extensions of finitary containers are sums of finite products so the
  results follows directly by the previous two Lemmas
  \ref{lem:prods-are-traversable} and \ref{lem:sums-are-traversable}.
\end{proof}
It remains an open question whether the traversals defined in the
above theorem are essentially unique. In other words, whether there is
an isomorphism between permutations of arities on finitary containers
and their traversals.


\newcommand{\pure}{\eta}
\section{Analysis of the Laws}
\label{sec:law-analysis}

\subsection{Unitarity}\label{sec:unitarity}
Unitarity implies the so-called ``purity law''
$\delta^F \cdot T(\pure) =
\pure_{T}$~\cite{Gibbons.Oliveira.Iterator}. In fact, the two laws are equivalent.

To see this, we notice that the identity functor is initial in the
category $\cA$ of
applicative functors and applicative morphisms, with the
universal map given by $\pure$. Hence we obtain the purity law by the
following naturality square.

\[
\xymatrix@C+=2cm{
T \Id \ar[r]^{\delta^\Id=\iid} 
      \ar[d]_{T(\pure)}
& \Id T \ar[d]^{\pure_T}
\\
TF  \ar[r]_{\delta^F}
&
FT
}
\]

Conversely, instantiating the purity law for the identity functor, we
obtain unitarity.

Unitarity (and hence, the purity law) is stronger
than shape preservation. For example, returning the mirror of a binary
tree is forbidden. 
By requiring unitarity to hold we are
implicitly stating that what matters in a traversal is the order in
which the effects are collected.

Note that unitarity implies that each element is
traversed at least once. Hence distributive laws such as $distL$ and
$distL'$ of Section~\ref{sec:need_for_laws} do not respect unitarity.

\subsection{Linearity}

One of the sought-after effects of the laws is to rule out traversals
that go more than once over each element.  We will show that this
kind of definition does not respect the linearity
law by analysing the simple example of traversing the identity
functor.

Consider the following distributive law of the identity functor over
an arbitrary applicative functor,
\[
\delta^{F}_{X} : 
\xymatrix@C+=2cm{ \Id(FX)=FX \ar[r]^{d_{FX}} & 
                  FX\times FX \ar[r]^{\mu^F_X} & 
                  F(X\times X) \ar[r]^{F\pi_1} & FX=F(\Id\,X)}
\]
where $d_X (x : X) = (x,x) : X\times X$ is the diagonal function. The
distributive law $\delta$ traverses over the data twice so it is not a
proper traversal. Although unitarity holds for $\delta$, linearity
does not, as the following 
counter-example shows.

Consider the applicative functor $L$ arising from the list monad,
and $[[],[1]] \in L(L(\Nat))$. Calculating, we obtain that linearity does not hold:
\[
\delta^{L\circ L}_\Nat [[],[1]] \quad=\quad [[],[],[],[1]] \quad\not=\quad [[],[1],[],[1]]
\quad=\quad \delta^L_\Nat (L \delta^L_\Nat [[],[1]]) 
\]

A similar counter-example can be constructed for the traversal of
lists $distL''$ of Section~\ref{sec:need_for_laws}.
\[
distL''^{L\circ L} [[[], [[1]]]] \ =\  [[],[],[],[[[1]]]] \quad\not=\quad
[[],[],[[[1]]],[[[1]]]] \ =\  distL''^L \circ L (distL''^L) [[[], [[1]]]]
\]





These counter-examples suggest that distributive laws that traverse
over elements more than once will violate the linearity law. This 
consequence of the linearity law was apparently overlooked in
\cite{Gibbons.Oliveira.Iterator}. 

\subsection{Preservation of Kleisli Composition}

The following lemma was proved as a property of the canonical
distributivity of finitary containers by Moggi et al.~\cite{MoggiBJ99}
and under label \emph{sequential composition of monadic traversals} in
\cite{Gibbons.Oliveira.Iterator}. Here, however, we can prove it as a
consequence of the definition of traversability.

\begin{lemma}
Let $(T,\delta)$ be a traversable functor. Then for any commutative
monad $M$, we have that $\delta$ preserves Kleisli composition, i.e.
\[
\xymatrix@C+=2cm{
TMM \ar[r]^{\delta^{M}M}
   \ar[d]_{T\flatten}
  & MTM \ar[r]^{M\delta^M}& MMT \ar[d]^{\flatten T}
\\
TM \ar[rr]_{\delta^M}& & MT
}
\]
where $\flatten$ is the multiplication of the monad $M$.
\end{lemma}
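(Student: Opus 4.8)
The plan is to recognise the monad multiplication $\flatten : MM \larr M$ as an applicative morphism and then read off the square purely formally from the naturality and linearity axioms of $\delta$, with no further computation about $T$.

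First I would set up the two applicative functors involved. Since applicative functors are closed under composition, the composite $M \circ M$ carries a canonical applicative structure with distributive law $\delta^{MM}$, and $M$ itself is applicative as a monad. The crucial preliminary step is to check that, because $M$ is \emph{commutative}, the multiplication $\flatten$ is a morphism in $\cA$, i.e.\ that it respects both the unit and the multiplication of the applicative structure. Respecting the unit is immediate from the monad unit laws ($\flatten$ cancels a freshly introduced $\eta$). Respecting the applicative multiplication amounts to the equation $\flatten_{X\x Y} \cdot \mu^{MM}_{X,Y} = \mu^M_{X,Y} \cdot (\flatten_X \x \flatten_Y)$, which says that combining two $M$-computations and then flattening agrees regardless of the order in which the nested effects are sequenced; this is exactly the commutativity hypothesis on $M$. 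I expect this verification to be the main obstacle, since it is the only place where commutativity is genuinely used, and it requires unwinding the definition of the monad-induced $\mu^M$ in terms of strength and $\flatten$; everything else is formal.

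Granting that $\flatten \in \cA(MM, M)$, I would instantiate the naturality axiom of $\delta$ at $\alpha = \flatten$, obtaining
\[
\flatten_T \cdot \delta^{MM} \;=\; \delta^M \cdot T\flatten .
\]
Independently, the linearity axiom at $\fF = \fG = M$ gives the factorisation
\[
\delta^{MM} \;=\; M\delta^M \cdot \delta^M M .
\]
Substituting the second equation into the first produces
\[
\flatten_T \cdot M\delta^M \cdot \delta^M M \;=\; \delta^M \cdot T\flatten ,
\]
which is precisely the commuting square claimed in the statement, with $\flatten$ the monad multiplication throughout. Note that unitarity of $\delta$ is not needed here: the result is a direct corollary of naturality and linearity once $\flatten$ is known to be applicative.
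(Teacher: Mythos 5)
Your proposal is correct and follows exactly the paper's argument: establish that commutativity makes $\flatten : M\circ M \to M$ an applicative morphism, then combine the naturality axiom instantiated at $\flatten$ with the linearity axiom at $\fF=\fG=M$. You spell out the verification that $\flatten$ respects the applicative multiplication in more detail than the paper, which simply asserts it, but the route is the same.
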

\begin{proof}
  The commutative monad $M$ induces a commutative applicative functor.
  The multiplication of the monad $\flatten : M\circ M \to M$ is an
  applicative morphism (thanks to commutativity). Hence,
  the following diagram commutes because of preservation of
  composition and naturality of $\delta$.
\[
\xymatrix@C+=2cm{
TMM \ar[r]^{\delta^{M}M}
   \ar[d]_{T\flatten}
   \ar@/_2em/[rr]_{\delta^{M\circ M}}
  & MTM \ar[r]^{M\delta^M}& MMT \ar[d]^{\flatten T}
\\
TM \ar[rr]_{\delta^M}& & MT
}
\]
\end{proof}

 

\subsection{Categorical meaning of the laws}
%
Let \textsf{App} be the 2-category with one object, arrows:
applicative functors and 2-cells: applicative morphisms. And let $J:
\mathsf{App} \larr \mathsf{Cat}$ be the inclusion into the
2-category of categories sending the object of $\mathsf{App}$ to
$\Set$. Then a traversable functor is exactly a co-lax natural
transformation $J \larr J : \mathsf{App} \larr \mathsf{Cat}$. See
\cite{kelly74:elements-of-2categories} for the elementary 2-categorical notions.

It's interesting to compare this to the following characterisation in
terms of distributive laws of monoidal actions
\cite{skoda04:distributive-laws}.  Consider the strict monoidal
category $(\mathcal{A},\circ,\Id)$ of applicative functors and
applicative morphisms with the monoidal action
applicative functor composition. This category induces a monoidal
action over $\Set$, where the action
$\lozenge:\mathcal{A}\times\Set\to\Set$ is simply
application of the functor, i.e. $\fF \lozenge X \mapsto \fF X$ and
$\fF \lozenge f \mapsto \fF f$.

A distributive law of a functor $T$ over the monoidal action $\lozenge$
of $\cA$ is a binatural transformation $\delta:T(\mathunderscore\lozenge\mathunderscore)
\to \mathunderscore\lozenge(T\mathunderscore)$, satisfying the axioms:

\[
\xymatrix{
T((F\circ G) \lozenge X) 
  \ar[d]_{\iid} \ar[rr]^{\delta^{F\circ G}_X}
& & (F\circ G) \lozenge T X  \ar[d]^{\iid}
\\
T(F \lozenge (G \lozenge X)) \ar[r]_{\delta^F_{G\lozenge X}}
 & F \lozenge T(G\lozenge X) \ar[r]_{F \lozenge \delta^G_X}
 & F \lozenge (G \lozenge TX)
}
\qquad
\xymatrix{
 & TX \ar[dl]_{T\iid} 
      \ar[dr]^{\iid}
\\
T(\Id \lozenge X) 
\ar[rr]_{\delta^{\Id}_X}
& & \Id \lozenge T X
}
\]

Hence, a functor is traversable when it comes equipped with a
distributive law over the monoidal action of applicative functors.




\section{Conclusion}
\label{sec:conclusion}

The definition of traversability as a distributive law needs to be
strenghtened in order to capture the intuitive notion of a traversal.
We have provided two simple laws and shown that these laws hold for finitary
containers, and we have provided evidence that they restrict
traversals to those that go over each position exactly once.


Finitary containers are the largest known class of traversable
functors, and we are not aware of any functor that is traversable and it
is not a finitary container. It has been conjectured that
every traversable functor is a finitary container~\cite{MoggiBJ99}, but
proof of this has eluded us. One possiblity for proving it might be to
take a syntactic approach and restrict the proof to functors in a
given universe.

The characterisation of traversals as a distributive law over
applicative functors is interesting because it leads to other kinds of
traversability by changing distributivity over applicative functors by
distributivity over others kinds of functor.  For example, one might
consider a distributive law over all commutative applicative functors
 i.e. applicative functors $F$ for which
\newcommand{\swap}{\mathsf{swap}} 
$$F\swap_{X,Y}\circ\mu = \mu \circ \swap_{FX,FY} :
FX\times FY \to F(Y\times X)\enspace,$$
where $\swap_{X,Y} : X\times Y \to Y\times X$ is the obvious morphism.
Then, one would obtain a class of traversable functors that includes finitary
quotient containers~\cite{alti:mpc04} such as unordered pairs, where
commutativity of the applicative functor is essential.

\paragraph{Acknowledgement}

We thank the anonymous reviewers for their constructive criticism.





\bibliographystyle{eptcs}
\bibliography{bib.bib}

\end{document}